\newcommand{\be}{\begin{equation}}
\newcommand{\ee}{\end{equation}}
\newcommand{\ba}{\begin{eqnarray}}
\newcommand{\ea}{\end{eqnarray}}
\newcommand{\baa}{\begin{eqnarray*}}
\newcommand{\eaa}{\end{eqnarray*}}
\newcommand{\bb}{}
\newcommand{\bi}[1]{\bibitem{#1}}
\newcommand{\lab}[1]{\label{#1}}
\newcommand{\re}[1]{(\ref{#1})}
\newcounter{my}
\newcommand{\he}%
   {\stepcounter{equation}\setcounter{my}%
   {\value{equation}}\setcounter{equation}0%
   }%
\newcommand{\she}%
   {\setcounter{equation}{\value{my}}%
    }%
\renewcommand\t{\tilde}
\newtheorem{pr}{Proposition}
\theoremstyle{definition}
\numberwithin{equation}{section}
\begin{document}

\title[Algebraic discrete Fourier transform]{An algebraic interpretation of the intertwining 
operators associated with the discrete Fourier transform}

\author{Mesuma Atakishiyeva}
\author{Natig Atakishiyev}
\author{Alexei Zhedanov}

\address{Universidad Aut{\'o}noma del Estado de Morelos, Centro de Investigaci{\'o}n en Ciencias, \newline
Cuernavaca, 62250, Morelos, M{\'e}xico}

\address{Universidad Nacional Aut{\'o}noma de M{\'e}xico, Instituto de Matem{\'a}ticas,  Unidad Cuernavaca,  \newline
Cuernavaca, 62210, Morelos, M{\'e}xico}

\address{School of Mathematics, Renmin University of China, Beijing 100872, China}

\begin{abstract}
We show that intertwining operators for the discrete Fourier transform form a cubic 
algebra $\mathcal{C}_q$ with $q$ a root of unity. This algebra is intimately related 
to the two other well-known realizations of the cubic algebra: the Askey-Wilson algebra 
and the Askey-Wilson-Heun algebra.
\end{abstract}

\maketitle

\section{Introduction}
We begin by recalling first that the discrete (finite) Fourier transform (DFT) based 
on $N$ points is represented by an $N\times N$ unitary symmetric matrix $\Phi$  
with entries (see, for example, \cite{McCPar}-\cite{Ata})
\be
\Phi_{kl} = N^{-1/2} \: q^{kl}, \qquad k,l=0,1,\dots, N-1,
\label{DFT}
\ee
where $ q=\exp{\left(2 \pi {\rm i}\, / N\right)} $ is a primitive $N$-th root of unity. 
Note that the matrix $\Phi$ was introduced by Sylvester \cite{Sylv} back in 1867 
and is frequently referred to as Schur's matrix.

In the present work we discuss some additional findings concerning algebraic 
properties of two finite-dimensional intertwining operators, associated with 
the DFT matrix \re{DFT}. These operators are represented by matrices $A$ 
and $B$ of the same size $N\times N$ such that the intertwining relations 
\be
A \,\Phi ={\rm i}\,\Phi A, \qquad B \,\Phi = -\, {\rm i}\, \Phi \,B, 
\lab{inter} \ee
are valid.  The matrices $A$ and $B$ have emerged in a paper \cite{MesNat} 
devoted to the problem of finding an explicit form for the  difference operator 
that governs the eigenvectors of the DFT matrix $\Phi$.

The purpose of this work is to provide a detailed account of a cubic algebra 
at roots of unity $\mathcal{C}_q$, which these intertwining operators form. 

The lay out of the paper is as follows. Section 2 collects all those known facts 
about the  intertwining operators  $A$ and $B$, which are needed in section 3
for deriving an explicit form of an algebra $\mathcal{C}_q$, formed by these 
operators. In section 4  we show how this algebra $\mathcal{C}_q$ is related 
to the another well-known realization of the cubic algebra -- the Askey-Wilson
algebra. Section 5 closes the paper with a brief discussion of the interrelation
between the algebra $\mathcal{C}_q$ and yet another realization of the cubic
algebra -- the Askey-Wilson-Heun algebra. 

\section{Intertwining operators}

This section begins by rederiving an explicit form of the intertwining operators $A$ 
and $B$ \cite{MesNat}. Let us assume that they are of the most general {\it cyclic} 
3-diagonal form, that is,
\[
A=\left[\begin{array}{ccccc}
b_0 & c_0 & 0 & \cdots & a_{N-1} \\
a_0 & b_1 & c_1 &  \cdots & \cdots  \\
0 & \ddots & \ddots & \ddots & 0 \\
\cdots & \cdots & a_{N-3} & b_{N-2} & c_{N-2} \\
c_{N-1} & \cdots & 0 & a_{N-2} & b_{N-1}
\end{array}\right]
\] 
and
\[
B =\left[\begin{array}{ccccc}
\t b_0 & \t c_0 & 0 & \cdots & \t a_{N-1} \\
\t a_0 & \t b_1 & \t c_1 & \cdots  & \cdots  \\
0 & \ddots & \ddots & \ddots & 0 \\
\cdots &\cdots  & \t a_{N-3} & \t b_{N-2} & \t c_{N-2} \\
\t c_{N-1} & \cdots & 0 & \t a_{N-2} & \t b_{N-1}
\end{array}\right],
\]
where $a_k,b_k,c_k, \t a_k,  \t b_k,  \t c_k$, $ k=0,1,\dots, N-1,$  are complex 
parameters. Note that no additional relations between $A$ and $B$ are assumed 
at the outset, however we specify $B$ later on as a Hermitian conjugate matrix 
with respect to $A$, i.e., $ B=A^{\dagger}$.

From the first identity in \re{inter} it follows at once that the elements  $a_k,b_k,c_k$  
of the matrix $A$ are interconnected by the equations
\be 
a_{j-1} q^{-\,k} + b_j + c_j\, q^k = i\left(q^{-j} c_{k-1} + b_k + a_k\,q^j  \right)
\lab{eq_kj}, \ee
so one readily concludes that $a_k,b_k,c_k$ are linear combinations of $q^k, q^{-\,k}$ 
and a constant. A more detailed analysis of equations \re{eq_kj} leads to the {\it general 
solution}
\ba
a_k = - {\rm i} \alpha + \beta \left({\rm i} q^{-\,k} - q^{k+1} \right), \quad b_k 
= \alpha \left( q^k- q^{-\,k} \right), \quad c_k ={\rm i} \alpha + \beta \left(q^{-\,k} 
- {\rm i} q^{k+1} \right), \lab{gen_abc} 
\ea
where $\alpha$ and  $\beta$ are two arbitrary complex parameters.

Quite similarly, one can find that the general solution for the operator $B$ is
\ba
\t a_k = {\rm i} \t \alpha - \t \beta \left( {\rm i} q^{-\,k} + q^{k+1} \right), \quad 
\t b_k = \t \alpha \left( q^k - q^{-\,k} \right), \quad \t c_k = - {\rm i} \t \alpha + 
\t \beta \left(q^{-\,k} + {\rm i} q^{k+1} \right), \lab{gen_tabc} \ea
where $\t \alpha$ and $\t \beta$ is another pair of arbitrary complex parameters. 
Assume now that $B$ is the Hermitian conjugate of $A$. Then we have the conditions
\[
\t a_n = c_n^*, \qquad \t b_n = b_n^*, \qquad \t c_n = a_n^*,
\]
where $a^*$ means complex conjugate of $a$. From this conditions one finds that
\be
\t \alpha = - \,\alpha^*, \qquad q\, \t \beta = -\, \beta^* .
\ee
If one introduces two linear combinations of the operators $A$ and $A^{\dagger}$
of the form 
\[
X=\frac{1}{2}\left(A+A^{\dagger} \right)  \quad \mbox{and} 
\quad  Y=\frac{1}{2{\rm i}}\left(A-A^{\dagger} \right), \lab{XY_def}
\]
then the operators $X$ and $Y$ are Hermitian and play the role of finite-dimensional 
analogs of the operators of the coordinate and momentum in quantum mechanics.

It is natural to require that the coordinate operator should be {\it diagonal}, in complete
agreement with the ordinary meaning of the coordinate operator. From the above relations 
between $a_n,b_n,c_n$ one concludes that the operator $X$ is diagonal if and only if 
$\beta=0$ and $\alpha$ is a pure imaginary parameter:
\be
\alpha + \alpha^*=0, \qquad \beta=0. \lab{albeta_diag} \ee
Without loss of generality one can put $\alpha = -\, {\rm i}$, then 
\[
a_n= -\, 1, \quad b_n = 2 \sin(n\,{\theta}_N), \quad c_n = 1, 
\quad {\theta}_N: = \frac{2 \pi}{N}\,.
\]
Hence the matrix $X$ is diagonal,
\be
X\,= \,2\, diag\,(0,s_1,s_2,\cdots,s_{N - 2},s_{N - 1})\,,\qquad 
s_n = \frac{q^n-q^{- n}}{2{\rm i}} = \sin\left(n\,{\theta}_N\right)\,, \lab{s_n} 
\ee
whereas the {\lq}momentum{\rq} matrix $Y$ is tridiagonal (with zero entries on the main 
diagonal),
\be
Y= {\rm i}\left[\begin{array}{ccccc}
 0   &  - 1 & 0 & \cdots & 1 \\
 1   &  0 & - 1 & \cdots &  0  \\
\cdots & \ddots & \ddots & \ddots & \cdots \\
 0 & \cdots & 1 & 0 & - 1 \\
- 1 & \cdots & 0 & 1 & 0
\end{array}\right].
\lab{Y_exp} \ee
The final form of the intertwining operators 
\be
A= X + {\rm i} Y= \left[\begin{array}{ccccc}
  0  &   1     & 0 & \cdots & - 1 \\
- 1 & 2 s_1 & 1 &\cdots  &  0  \\
\cdots & \ddots & \ddots & \ddots & \cdots\\
  0 & \cdots & - 1 &2  s_{N-2} & 1 \\
 1 & \cdots & 0 & - 1 &2  s_{N-1}
\end{array}\right] 
\lab{A_fin} \ee
and
\be
A^{\dagger}\equiv A^{\intercal}= X - {\rm i} Y
= \left[\begin{array}{ccccc}
  0  & - 1 & 0 & \cdots & 1 \\
  1 & 2 s_1 & - 1 &\cdots  & \vdots  \\
\cdots & \ddots & \ddots & \ddots &\cdots \\
 0  & \cdots & 1 &2  s_{N-2} & - 1 \\
 - 1 & \cdots & 0 &  1 &2  s_{N-1}
\end{array}\right] 
\lab{AP_fin} \ee
thus coincide with those found in \cite{MesNat}. It remains only to add that the singular
(noninvertible) matrices $A$ and $A^{\intercal}$ are traceless, because one checks
easily that $\sum_{k=1}^{N - 1}s_k = 0$. It is a remarkable fact that the rank of the
matrix $A$ (which is the same as the rank of the matrix $A^{\intercal}$) turns out to
be different for the odd and even dimensions $N$:

1°. In the case of odd N the rank of the matrix $A$ is equal to $N - 1$ and the null space of
the matrix $A$ is one-dimensional;

2°. In the case of even N the rank of the matrix $A$ is equal to $N - 2$ and the null space of
the matrix $A$ is two-dimensional \cite{4Open}.

This unforeseen distinction between the properties of odd and even dimensional DFT's  simply
indicates that the discrete reflection symmetry in the vector spaces $\mathbb{R}^N$, spanned 
by the even-dimensional DFT eigenvectors, is spontaneously broken  \cite{SprProc}.

Recall  that the standard coordinate and momentum operators $x$ and $p$ are known to 
satisfy the Heisenberg commutation relation $[x,p]=i\,$. Contrary to this case, the operators 
$X$ and $Y$ do not satisfy such a simple commutation relation. In the next section we show 
that these operators satisfy instead a {\lq}classical{\rq} algebra of the Askey-Wilson type.

\section{Cubic algebra, generated by the operators $A$ and $A^{\intercal}$}
\setcounter{equation}{0}
The operators $A$ and $A^{\intercal}$, defined as in (2.8) and (2.9), respectively, 
do satisfy a simple {\it cubic} algebra under the commutation relation. Indeed, let us 
choose the third operator $C$ as the commutator
\be
C=[A, A^{\intercal}] = AA^{\intercal} - A^{\intercal} A. \lab{C} 
\ee 
The operator $C$ is represented by a 3-diagonal cyclic matrix,
\[
C= 4\left[\begin{array}{ccccc}
0 & s_1 & 0 & \cdots & s_1 \\
s_1 & 0  & s_2-s_1 & \cdots & \cdots  \\
0 & s_2-s_1 & 0  & s_3 - s_2 & \cdots  \\
0 & \cdots & \cdots & \cdots & 0 \\
\cdots & \cdots& s_{N-2} - s_{N-3} & 0 & s_{N-1} - s_{N-2}  \\
s_1 & \cdots & 0 & s_{N-1} - s_{N-2} & 0
\end{array}\right],
\]
and it is not hard to verify directly that
\be
[C,A] = \beta_1 A A^{\intercal} A + \beta_2 A - \beta_1 {(A^{\intercal})}^3, \quad [A^{\intercal},C] 
= \beta_1  A^{\intercal} A A^{\intercal} + \beta_2 A^{\intercal} - \beta_1 {A}^3, \lab{AB_alg} 
\ee
where
\be
\beta_1= \frac{(1-q)^2}{1+q^2}, \qquad \beta_2 = -\, 4 \: \frac{(q-q^{-1})^2}{q+q^{-1}}. \lab{par_AB} 
\ee
To check whether the three operators $ A$, $A^{\intercal}$ and $C$ form a closed algebra,
one should verify their compatibility with the Jacobi identity
\be
[A, [A^{\intercal}, C]] +  [C, [A, A^{\intercal}]] +  [A^{\intercal}, [C, A]] = 0. \lab{JId} 
\ee
Note that in our particular case the second double commutator in the Jacobi identity \re{JId}
vanishes because of definition (3.1). Hence the Jacobi identity \re{JId} reduces to the sum 
of only two double commutators, the first and the third one. So one evaluates next that 
\be
[A, [A^{\intercal}, C]] = \beta_1\, \Big\{ (A A^{\intercal})^2 - (A^{\intercal} A)^2\Big\}
+ \beta_2\, C  \lab{JId1} 
\ee
and
\be
[A^{\intercal}, [C, A]] = \beta_1\, \Big\{ (A^{\intercal} A)^2 -  (A A^{\intercal})^2\Big\}
- \beta_2\, C.  \lab{JId3} 
\ee
Consequently 
\be
[A, [A^{\intercal}, C]] +   [A^{\intercal}, [C, A]] = 0, \lab{JId13} 
\ee
and the Jacobi identity \re{JId} holds.

This algebra possesses the Casimir operator
\be
Q_1 = C^2 + r_1 \: \{A^2, (A^{\intercal})^2 \} + r_2 \: \{A, A^{\intercal}\} 
- r_1 \left(A^4 + (A^{\intercal})^4  \right) \lab{Q1} 
\ee
with
\be
r_1 = \frac{(1-q)^2}{(1+q)^2}, \qquad  r_2= -\, 4\, \frac{\left( 1+{q}^{2} \right) 
 \left( q-{q}^{-1} \right) ^{2} } {\left(1+ q \right) ^{2} }. \lab{r_Q} 
\ee
When $N \to \infty$, the parameter $q$ goes to 1 and the coefficients $\beta_1$ and 
$\beta_2$  tend to zero. This means that the commutation relations (3.2) become trivial,
\be
[C,A]=[A^{\dagger},C]=0 \lab{triv_AC}, \ee
as happens in the case of the linear quantum harmonic oscillator with the Heisenberg commutation 
relation $[a, a^{\dagger}]=1$ for the lowering and raising operators $a$ and $ a^{\dagger}$ . For 
arbitrary natural $N$, however, we see that the algebra \re{AB_alg} is not classical  and  does not 
even belong to the HAW type. Nevertheless, this algebra is still very simple and can be exploited to 
derive further useful relations concerning discrete Fourier transform.

\section{Askey-Wilson algebra for the operators $X,Y$}
\setcounter{equation}{0}
It is a remarkable fact that the operators $X$ and $Y$ are {\lq}classical{\rq} operators 
with nice spectral properties. For $X$ this is obvious because the spectrum of $X$ is 
\be
x_n = -{\rm i}\,(q^n - q^{-\,n}) = 2\,s_n, \qquad n=0,1,\dots, N-1 \lab{sp_x}, 
\ee
which indicates that $x_n$ belongs to the class of the Askey-Wilson spectra of the type
\be
\lambda_n = C_1 q^n +C_2 q^{-n} + C_0\,. \lab{AW_sp} 
\ee
Observe also that the eigenvectors of the operator $X$ are represented by the 
euclidean $N$-column orthonormal vectors ${e}_k$  with the components
$({e}_k)_l = {\delta}_{kl}$, $k,l = 0,1,\dots,N-1$, that is,
\be
X\,{e}_k = x_k\,{e}_k\,.\lab{evX} 
\ee
The spectrum $y_n$ of the matrix $Y$  belongs to the same Askey-Wilson family too.
Moreover, it turns out  that the spectra of the matrices $Y$ and $X$ are {\it equal}. 
The reason for this similarity is that the operators $X$ and $Y$ are unitary equivalent. 
Indeed, from the intertwining relations \re{inter} it follows that  \cite{MesNat} 
\be
Y \,\Phi =  \Phi\, X, \qquad X\,\Phi = - \,\Phi\, Y\,.\lab{inter_XY} 
\ee
This means that
\be
Y =  \Phi\, X\, \Phi^{\dagger},\lab{Y_X} 
\ee
i.e., that the operators $X$ and $Y$ are unitary equivalent and hence isospectral. 

Note that the spectrum of $X$ is simple (i.e., nondegenerate) only if $N$ is odd, whereas 
for $N$ even the spectrum of $X$ is doubly degenerate. Evidently, the same statement is 
true for the operator $Y$. 

Let ${\epsilon}_k, k=0,1,\dots, N-1$, be a set of the eigenvectors of the operator $Y$, 
\be
Y\, {\epsilon}_n = x_n \,{\epsilon}_n, \qquad n=0,1,\dots, N-1.\lab{evY} 
\ee
From the relations \re{evX} and \re{Y_X} it follows at once that the expansion relation
\be
{\epsilon}_n = \Phi\, e_n = \sum_{k=0}^{N-1} {\Phi_{kn}\, e_k} 
= N^{-1/2} \Big(1, q^n, q^{2n},\dots, q^{(N-1)n}\Big)^{\intercal} 
\lab{d_Phi_e}
\ee
between the sets of the eigenvectors $e_k$ and ${\epsilon}_n$ is valid. We thus see 
that the discrete Fourier matrix $\Phi_{kn}$ can be also defined as the matrix of the 
overlap coefficients between eigenbases of the operators $X$ and $Y$. Observe also
that it is not hard to show that the operator $Y$ is two-diagonal in the eigenbasis of
the operator $X$:
\be
Y\, {e}_n = {\rm i}\,(e_{n+1}\,-\,{e}_{n-1})\,,\lab{YebX} 
\ee
where $e_{-1} = e_{N-1}$ and $e_{N} = e_{0}$. Moreover, from \re{YebX} and 
\re{inter_XY} it follows that the operator $X$ is similarly two-diagonal in the eigenbasis 
of the operator $Y$:
\be
X\, {\epsilon}_n = {\rm i}\,({\epsilon}_{n-1}\,-\,{\epsilon}_{n+1})\,.\lab{XebY} 
\ee

Note that this symmetry between operators $X$ and $Y$ can be explained from the algebraic 
point of view in the following way.
\begin{pr}
The operators $X$ and $Y$ provide a representation of the Askey-Wilson algebra with 
commutation relations 
\ba
&&X^2Y +YX^2 -\left(q+q^{-1} \right) XYX = -\left(q-q^{-1}\right)^2 Y, \nonumber \\
&&Y^2 X + X Y^2 -\left(q+q^{-1} \right)YXY = -\left(q-q^{-1}\right)^2 X. \lab{AWF} 
\ea
\end{pr}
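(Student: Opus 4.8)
The plan is to verify both relations by computing the action of each side on the standard basis $\{e_n\}$, which simultaneously diagonalizes $X$ and serves as the eigenbasis in which $Y$ has the simple two-diagonal form \re{YebX}. Concretely, I would use $X e_n = x_n e_n$ with $x_n = -{\rm i}(q^n - q^{-n})$ together with $Y e_n = {\rm i}(e_{n+1} - e_{n-1})$, keeping in mind the cyclic identifications $e_{-1} = e_{N-1}$, $e_N = e_0$. Because $q^N = 1$, the spectral values satisfy $x_{n+N} = x_n$, so the computation is insensitive to the wrap-around and no separate boundary analysis is needed.

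First I would apply the left-hand side of the first relation in \re{AWF} to $e_n$. Each of the three terms $X^2 Y$, $Y X^2$, $XYX$ sends $e_n$ into the span of $e_{n+1}$ and $e_{n-1}$, with scalar coefficients built from $x_n$ and $x_{n\pm 1}$. Collecting the coefficient of $e_{n+1}$ reduces the operator identity to the scalar claim
\[
x_{n+1}^2 + x_n^2 - (q+q^{-1})\, x_n x_{n+1} = -(q-q^{-1})^2,
\]
while the coefficient of $e_{n-1}$ yields the same identity with the index $n+1$ replaced by $n-1$.

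The key step is verifying this scalar identity. Writing $u = q^n$, so that $x_n = -{\rm i}(u - u^{-1})$ and $x_{n+1} = -{\rm i}(qu - q^{-1}u^{-1})$, I would expand the three products. The terms proportional to $u^2$ and to $u^{-2}$ cancel identically, leaving only the constant $4 - (q+q^{-1})^2 = -(q-q^{-1})^2$. The same cancellation occurs for the $e_{n-1}$ coefficient upon replacing $q$ by $q^{-1}$, and since $(q^{-1}-q)^2 = (q-q^{-1})^2$ the right-hand side is unchanged. This establishes the first relation.

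For the second relation I would avoid a repeat computation and instead exploit the unitary equivalence encoded in the intertwining relations \re{inter_XY}. From $Y\Phi = \Phi X$ and $X\Phi = -\,\Phi Y$ one obtains $\Phi X \Phi^{\dagger} = Y$ and $\Phi Y \Phi^{\dagger} = -X$, so conjugation by $\Phi$ acts on the generators by $X \mapsto Y$, $Y \mapsto -X$. Applying this substitution to the already-proven first relation and multiplying through by $-1$ produces exactly the second relation. The only real obstacle is the sign bookkeeping in this conjugation step; once the map $X\mapsto Y$, $Y\mapsto -X$ is pinned down, the second identity follows with no further calculation.
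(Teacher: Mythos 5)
Your proof is correct, but it only half-coincides with the paper's. For the first relation in \re{AWF} your argument is essentially the paper's: the paper computes the matrix elements $\bigl(X^2Y+YX^2-(q+q^{-1})XYX\bigr)_{kl}=\bigl(x_k^2+x_l^2-(q+q^{-1})x_kx_l\bigr)Y_{kl}$ and shows that the coefficient $a_k=x_k^2+x_{k-1}^2-(q+q^{-1})x_kx_{k-1}$ is independent of $k$ and equal to $4s_1^2=-(q-q^{-1})^2$ via the trigonometric identity $2c_kc_{2k-1}=c_{2k}+c_{2k-2}$; your expansion in powers of $u=q^n$ is the same verification written in exponential rather than trigonometric form, and your observation that $q^N=1$ disposes of the cyclic wrap-around is sound. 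The genuine difference is in the second relation: the paper proves it by a second direct computation, evaluating $(Y^2X+XY^2)_{kl}$ and $(YXY)_{kl}$ and invoking $x_{k+1}+x_{k-1}=4c_1s_k$, whereas you derive it from the first relation by conjugation with $\Phi$, using $\Phi X\Phi^{\dagger}=Y$ and $\Phi Y\Phi^{\dagger}=-X$, which follow from \re{inter_XY} and \re{Y_X}. This step is legitimate (those relations are established in the paper before the proposition, so there is no circularity), and the sign bookkeeping indeed works: every term of the first relation is linear in $Y$, so the substitution $X\mapsto Y$, $Y\mapsto -X$ produces one overall minus sign on each side, which cancels. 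What your route buys is economy --- one computation instead of two --- and a structural explanation of the $X\rightleftarrows Y$ symmetry that the paper only points out after the fact in its Remark; what the paper's route buys is self-containedness, since it never appeals to the DFT intertwining relations but only to the explicit matrix forms of $X$ and $Y$.
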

\begin{proof}
Taking into account that the componentwise structures of the operators $X$ and $Y$ in
the euclidean basis $e_k$  are of the form ({\it cf} \re{evX} and \re{YebX}, respectively)
$$
X_{kl} = x_k\,\delta_{k,l}, \qquad Y_{kl} = {\rm i}\,\Big(\delta_{k,l+1}\,-\,\delta_{k,l-1}\Big),
$$
one evaluates first that
\be
\Big(X^{2} Y + Y X^{2}\Big)_{kl}= \Big(x^2_k + x^2_l\Big)\,Y_{kl}, \qquad
\Big(XYX\Big)_{kl} = x_k x_l Y_{kl}.  \nonumber
\ee
Then 
\ba
\Big(X^{2} Y + Y X^{2} - (q - q^{-1}) XYX\Big)_{kl}= 
\Big(x^2_k + x^2_l - 2c_1 x_k x_l\Big)\,Y_{kl} 
= {\rm i}\Big( a_k \,\delta_{k,l+1}\,- \, a_{k+1}\delta_{k,l-1}\Big), \lab{a_k}
\ea
where $a_k= x^2_k + x^2_{k-1} - 2c_1 x_k x_{k-1}$. The last step is to show, by 
using the trigonometric identity $2c_k c_{2k-1}=c_{2k} + c_{2k-2}$, $c_k:=cos k{\theta}_N$,
that $a_k$ does not actually depend on the index $k$ since $a_k= 4s_1^2= - \,(q-q^{-1})^2$.
This means that the right-hand side in \re{a_k} is equal to $-\,(q-q^{-1})^2\,Y_{kl}$
and the first identity in \re{AWF} is proved.

Similarly, 
$$
\Big(Y^{2} X +  X Y^{2}\Big)_{kl}= \Big(x_k + x_l\Big)\,Y_{kl}^2
=(x_k + x_l)\Big(2\delta_{k,l}\,- \,\delta_{k,l+2}\,-\,\delta_{k,l-2}\Big),
$$
whereas
$$
\Big(YXY\Big)_{kl} = x_j Y_{kj} Y_{jl}=(x_{k+1} + x_{k-1})\,\delta_{k,l}
\,- \,x_{k-1}\delta_{k,l+2}\,-\,x_{k+1}\delta_{k,l-2}\,.
$$
Hence one concludes, upon using another trigonometric identity 
$x_{k+1} + x_{k-1}=4c_1s_k$, that
$$
\Big(Y^{2} X +  X Y^{2} - 2c_1 YXY\Big)_{kl}\,= 8 s_1^2 s_k\,\delta_{k,l}
\,=\,4 s_1^2 X_{k,l}\,=\,-\,(q-q^{-1})^2 X_{k,l}.
$$
Thus the second identity in \re{AWF} is proved as well.
\end{proof}
{\it Remark}. The generic Askey-Wilson algebra was introduced in \cite{Zhe_AW}. 
The above form of the Askey-Wilson algebra is due to Terwilliger \cite{Ter_AW}. 
Let us draw attention to the  remarkable symmetry property of relations  \re{AWF}: each 
can be obtained from another by the transposition $X \rightleftarrows Y$. 

The Askey-Wilson algebra can be presented in an equivalent  to \re{AWF} cyclic form 
if one introduces the third Hermitian operator $Z$ defined as
\be
Z= q^{1/2} \left[\begin{array}{ccccccc}
0 & q^{N-1} & 0 & \cdots & 0 & 0 & q^{N-1} \\
1 & 0 & q^{N-2} &  \cdots &  0 & 0 & 0 \\
0 & q & 0 &  \ddots  & 0 & 0 & 0 \\
\cdots & \cdots & \ddots & \ddots & \ddots &  \cdots &  \cdots \\
 0 & 0 & 0 & \ddots & 0 &  q^{2} & 0  \\
 0 & 0 & 0 & \cdots & q^{N-3} & 0 & q \\
 1 & 0 & 0 & \cdots &  0  &  q^{N-2} & 0 
\end{array}\right].\lab{Z}
\ee
Then the three Hermitean operators $X,Y,Z$ satisfy the Askey-Wilson algebra 
in the cyclic $\mathbb{Z}_3$ form
$$
p XY - p^{-1}YX = (q-q^{-1})Z, \qquad p ZX - p^{-1}XZ 
= (q-q^{-1})Y, $$ 
\be 
p YZ - p^{-1}ZY = (q - q^{-1})X, \lab{AW3} 
\ee
where $p=q^{1/2} = \exp\left({\pi}{\rm i}/N\right)$.This algebra can be 
considered as a q-analog of the rotation algebra $o(3)$. For $q$ a root of 
unity this algebra and its representations were considered in \cite{ZS_roots}.

Similar to the operator $Y$, the operator $Z$ is two-diagonal in the eigenbasis 
of the operator $X$:
\be
Z\, {e}_n = q^{1/2}\,\Big( q^n e_{n+1}\,+\,q^{-\,n} {e}_{n-1}\Big)\,.\lab{ZebX} 
\ee
Despite the completely symmetric form of the algebra \re{AW3}, the spectral 
properties of the opera\-tors $X,Y,Z$ are slightly different. 
\begin{pr}
By a similarity transformation $ S^{-1} Z S $ the matrix $Z$ can be transformed into 
the simple circulant matrix $(-1)^N\t Z$,
\be
\t Z=\left[
\begin{array}{cccccc}
 0 & 1 & 0 & \cdots & 0 & 1 \\
 1 & 0 & 1 &  \cdots & 0 & 0 \\
 0 & 1 & 0 & \ddots  & 0 & 0 \\
\cdots & \cdots & \ddots & \ddots &\ddots & \cdots  \\
 0 & 0 & 0 & \ddots & 0 & 1 \\
 1 & 0 & 0 & \cdots  & 1 & 0
\end{array}
\right],\lab{tZ} 
\ee
provided that the diagonal matrices $S$ and $S^{-1}$ are chosen componentwise 
as $S_{kl}= x_k\, \delta_{kl}$ and $S^{-1}_{kl}= x_k^{-1}\,\delta_{kl}$, with 
$x_k=(-1)^{kN} q^{k^2/2},\,\,0\leq k,l \leq N - 1 $.
\end{pr}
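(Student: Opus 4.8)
The plan is to exploit the fact that $S$ is diagonal, so that conjugation acts entrywise and merely rescales each matrix element of $Z$ by the corresponding ratio of diagonal factors. Concretely, from the two-diagonal action \re{ZebX} one reads off that the only nonzero entries of $Z$ are the cyclic sub- and super-diagonals
$$
Z_{l+1,\,l} = q^{1/2}\,q^{\,l}, \qquad Z_{l-1,\,l} = q^{1/2}\,q^{-\,l}, \qquad l=0,1,\dots,N-1,
$$
with all indices taken modulo $N$ and all diagonal entries equal to zero. Since $S_{kl}=x_k\,\delta_{kl}$ and $S^{-1}_{kl}=x_k^{-1}\,\delta_{kl}$, we have $(S^{-1}ZS)_{kl}=x_k^{-1}\,x_l\,Z_{kl}$, so the whole computation reduces to evaluating the ratios $x_{l\pm1}^{-1}x_l$ and checking that they exactly undo the $q$-powers carried by $Z$.

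First I would treat the two off-diagonal families for non-wrapping indices. Using $x_k=(-1)^{kN}q^{k^2/2}$, the sign factors contribute $(-1)^{\pm N}=(-1)^N$ while the quadratic exponents telescope, giving $x_{l+1}^{-1}x_l=(-1)^N q^{-l-1/2}$ and $x_{l-1}^{-1}x_l=(-1)^N q^{\,l-1/2}$. Multiplying by $Z_{l+1,l}=q^{1/2}q^{l}$ and $Z_{l-1,l}=q^{1/2}q^{-l}$ respectively, the $q$-powers cancel completely and each product collapses to the constant $(-1)^N$. As the diagonal of $Z$ vanishes, this already establishes $(S^{-1}ZS)_{kl}=(-1)^N\,\t Z_{kl}$ for every entry away from the corners, matching \re{tZ}.

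The one point that requires care — and the only place where anything nontrivial happens — is the cyclic (corner) entries $(0,N-1)$ and $(N-1,0)$, where the index arithmetic wraps and the naive telescoping would compare $x_N$ with $x_0$ (respectively $x_{-1}$ with $x_{N-1}$). The cleanest way to dispose of this is to show that the sequence $x_k$ is genuinely $N$-periodic, i.e. $x_{k+N}=x_k$: expanding $(-1)^{(k+N)N}q^{(k+N)^2/2}$ and using $q^{N}=1$ together with the key identity $q^{N^2/2}=p^{N^2}=e^{\pi{\rm i}N}=(-1)^N$, the stray factor collapses to $(-1)^{N^2+N}=(-1)^{N(N+1)}=1$, since $N(N+1)$ is always even. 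With periodicity in hand, the interior computation of the previous paragraph applies verbatim to the corners and finishes the proof; equivalently one may evaluate the two corner entries directly, where the same identity $q^{N^2/2}=(-1)^N$ is precisely what supplies the overall factor $(-1)^N$ and, at the same time, explains why the prefactor $(-1)^{kN}$ is needed in the definition of $x_k$.
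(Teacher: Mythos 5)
Your proof is correct, and it rests on the same mechanism as the paper's: since $S$ is diagonal, $(S^{-1}ZS)_{kl}=x_k^{-1}x_l\,Z_{kl}$, and one checks that these ratios exactly cancel the $q$-powers carried by $Z$, leaving the constant $(-1)^N$. The difference is one of logical direction and of completeness. The paper runs the computation as a derivation: it posits that the coefficient $b_k=q^{k-1/2}x_k^{-1}x_{k-1}$ appearing in \re{1pr} is the constant $(-1)^N$, reads this as the recurrence $x_{k+1}=(-1)^N q^{k+1/2}x_k$, and solves it to produce $x_k=(-1)^{kN}q^{k^2/2}$; you instead take the stated $x_k$ as given and verify the telescoping directly. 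More importantly, you isolate and prove the one point the paper leaves implicit: the cyclic corner entries $(0,N-1)$ and $(N-1,0)$, where the index arithmetic wraps around and consistency requires $x_N=x_0$ (equivalently, that the paper's $b_0$, which involves $x_{-1}\equiv x_{N-1}$, also equals $(-1)^N$). Your computation of $N$-periodicity via $q^{N^2/2}=e^{\pi {\rm i} N}=(-1)^N$ and $(-1)^{N(N+1)}=1$ is precisely the check needed to see that the solution of the paper's recurrence is well defined modulo $N$ --- it also explains why the sign $(-1)^{kN}$ in $x_k$ is not decoration but is forced by the wrap-around. So your argument buys a fully rigorous treatment of the cyclic structure, at the modest cost of presenting $x_k$ as an ansatz to be verified rather than as the outcome of solving a recurrence.
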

\begin{proof}
Taking into account that the componentwise structure of the matrix $Z$ in the euclidean basis 
$e_k$ is of the form $Z_{kl} = q^{-1/2}\,\Big( q^k \delta_{k,l+1}\,+\,q^{-\,k} \delta_{k,l-1}\Big)$
({\it cf} \re{ZebX}), one evaluates that 
\ba
\Big(S^{-1} Z S\Big)_{kl}= S^{-1}_{kn}Z_{nm} S_{ml} &=& x^{-1}_{k}x_l Z_{kl} \nonumber\\
=q^{-1/2}x^{-1}_{k}\,\Big( q^k x_{k-1}\delta_{k,l+1}\,+\,q^{-\,k} x_{k+1}\delta_{k,l-1}\Big)
&= &b_k\,\delta_{k,l+1}\,+\,b_{k+1}^{-1}\,\delta_{k,l-1}\,,\lab{1pr}
\ea
where $b_k = q^{k - 1/2}x^{-1}_{k}\,x_{k-1}$. Let us assume now that, in complete analogy
with the Proposition 1 ({\it cf} \re{a_k}), the coefficient $b_k$ does not depend on the index
$k$ and it is equal to $(-1)^N$:
\be
 b_k = q^{k - 1/2}x^{-1}_{k}\,x_{k-1} = (-1)^N\,.\lab{b_k}
\ee
Then from  \re{b_k} it follows at once that $x_k$ in this case have to satisfy a recurrence 
relation $x_{k+1}=(-1)^N q^{k+1/2} x_k$. Hence $x_{k}=(-1)^{kN} q^{k^2/2}$, 
$k=0,1,2,...,N-1$, and formula \re{1pr} reduces to
\be
\Big(S^{-1} Z S\Big)_{kl} = (-1)^N\Big(\delta_{k,l+1}
\,+\,\delta_{k,l-1}\Big)=(-1)^N \t Z_{kl} \,.\lab{2pr}
\ee
This completes the proof of the similarity between matrices $Z$ and $(-1)^N\t Z$.
\end{proof}
The spectrum $\zeta_n$ of the matrix $\t Z$ is well known \cite{Davis},
$$
\zeta_n = 2 \cos n\,{\theta}_N = 2 c_n, \quad n=0,1,2,...,N-1.
$$
So it is evident that apart from the eigenvalues  $\pm 2$, all other eigenvalues $\zeta_k$ 
of the matrix $Z$ are doubly degenerate because $\zeta_k = \zeta_{N-k}$, $k=1,2,...,N-1$, 
by definition. 

The Casimir operator of this algebra \re{AW3}, which commutes with $X,Y,Z$, 
has the expression
\be
Q= p\, XYZ - q \left( X^2 + Z^2 \right) - q^{-1} Y^2.
\ee
It is directly verified that $Q$ is Hermitian matrix, i.e., $Q^{\dagger} = Q$, and that 
for the given representation this operator is proportional to the identity matrix,
\be
Q = - \,2 \left( q+q^{-1} \right) \mathcal{I} = - \,4 \cos {\theta}_N \: \mathcal{I}.
\ee
 We thus have associated the {\lq}classical{\rq} Askey-Wilson algebra to the 
discrete Fourier transform.   

We close this section with the following extended remark about the algebra,
defined by \re{AW3}. Recall first that the Askey--Wilson polynomials reveal
themselves within the overlap functions between the two dual bases in the 
Askey--Wilson algebra \cite{Zhe_AW}. On the other hand, the algebra \re{AW3}
can be considered as a q-analog of the rotation algebra $o(3)$. Indeed, if one 
introduces three operators $K_0=\frac{1}{2s_1}X$,  $K_1=\frac{1}{2s_1}Y$ 
and $K_2=\frac{\rm i}{2s_1}Z$, then the commutation relations in \re{AW3}
can be rewritten as
\ba
[K_0,K_1]_q = K_2, \quad [K_0,K_2]_q = - K_1, 
\quad [K_1,K_2]_q = - K_0,       \lab{CRo3} \ea
where $[A,B]_q := q^{1/2} AB - q^{-1/2}BA$ by definition. The commutation
relations \re{CRo3} are known to define the $so_3(q)$ associative algebra for
$q$ a root of unity. This algebra and its representations were considered in 
\cite{ZS_roots}, where it was found that the above-mentioned overlap functions 
between the two dual bases of the operators $K_0$ and $K_1$ are expressed in 
this case in terms of the $q$-ultraspherical polynomials.

So the question can be then posed what is explicit form of a polynomial family, 
associated with the overlap functions between the two dual eigenbases of the 
operators  $K_0=\frac{1}{2s_1}X$ and $K_1=\frac{1}{2s_1}Y$ in our case. 
Now observe that from the relations \re{evX} and \re{d_Phi_e} it is evident 
that those overlap functions are equal to
\be
({\epsilon}_k, e_l)= N^{-1/2} \sum_{j=0}^{N-1} q^{- k j}  {\delta}_{l j}
= N^{-1/2} q^{- k l}=({\epsilon}_k, e_0) P_l({\mu}_k). \lab{ofXY}
\ee
Thus in the case under study the overlap functions between the two dual eigenbases of the 
operators $K_0$ and $K_1$ turn out to be the {\it monomials} $ P_l({\mu}_k) = {\mu}_k^l$ 
in the discrete argument $ {\mu}_k= q^{-\,k}$. These monomials $P_l({\mu}_k)$ form a 
complete and orthogonal system since
\be
N^{-1/2} \sum_{j=0}^{N-1}\,{P_k({\mu}_j)}^* \,P_l({\mu}_j) = {\delta}_{k l } . \lab{ocP}
\ee 

\section{Askey-Wilson-Heun operators and algebras}
\setcounter{equation}{0}
Starting from the pair of operators $X,$ which satisfy AW-algebra, one can construct the 
{\it algebraic Heun operator} $W$ as an arbitrary bilinear combination \cite{GVZ_H}
\be
W = \tau_1 XY + \tau_2 YX + \tau_3 X + \tau_4 Y + \tau_0 \mathcal{I} \lab{W} \ee
In \cite{Bas_H} it was shown that the pairs of operators $X,W$ or $Y,W$ constitute a cubic 
algebra closed under commutation relations. This algebra is the {\it Askey-Wilson-Heun} 
(briefly AWH) algebra.

We can apply this observation to our situation. Indeed, it is convenient to choose $X$ as 
one of the AWH generators because $X$ is the diagonal matrix with {\lq}classical{\rq} 
AW-spectrum. We can also choose $W$ as an operator {\it commuting} with $\Phi$. It is 
easily seen that the only possibility to get the operator $W$ commuting with $\Phi$ is to 
put $\tau_3=\tau_4=0$ and $\tau_2=-\tau_1$. This is equivalent to the choice of taking 
the commutator
\[
W = -2 i \: [X,Y] =  [A, A^{\dagger}].
\]
We show that the pair $X,W$ forms a simple cubic Heun-type algebra.  

Indeed, it is directly verified that the following relations hold
\be
X^2W+WX^2 -(q+q^{-1})XWX = g_1 W, \quad W^2X + XW^2 
- (q+q^{-1})WXW = g_2 X^3 + g_3 X, \lab{Heun_XW} 
\ee
where
\be
g_1=16s_1^2, \; g_2 = -16 c_1 (1+c_1)(1-c_1)^2, 
\; g_3= 64(1+c_1)(1-c_1)^2(3c_1+1). \lab{g_Heun} 
\ee
These relations correspond to a special case of the Askey-Wilson-Heun (AWH) 
algebra introduced in \cite{Bas_H}. 

From results of \cite{Bas_H} it is possible to derive the expression of the Casimir 
operator $Q$ which commutes with both $X$ and $Y$:
\be
Q= [X,W]^2 + \rho_1 \left((XW)^2 +(WX)^2  \right) 
+ \rho_2 W^2 + \rho_3 X^4 + \rho_4 X^2, \lab{Q_exp} 
\ee
where
\ba
&& \rho_1 = -\frac{(1-q)^2}{1+q^2}= \frac{1}{c_1} -1, \qquad  
\rho_2 = 4\, (q-q^{-1})^2 =-16 s_1^2, \nonumber \\
&& \rho_3 = \left( q+q^{-1} \right)^4 \, \frac{(1+q)^2}{1+q^2}
= 16 s_1^4 \left( 1+ c_1^{-1}\right), \nonumber \\
&& \rho_4 = - 4\,{\frac { \left( 1+q \right) ^{2} \left( 5\,{q}^{4}+2\,{q}^{3}
+2\,{q}^{2}+2\,q+5 \right)  \left( q-1 \right) ^{4}}{ \left( {q}^{2}+1
 \right) {q}^{4}}} = 64 s_1^2 \left( 1-c_1^{-1}\right) \left(5 c_1^2 +c_1 -2\right).
 \lab{rho_Q}  \ea
It is readily verified that for the given representations \re{s_n} and \re{Y_exp} of the 
matrices $X$ and $Y$, the operator $Q$ becomes the identity matrix to within a constant:
\be
Q = -64 \left(q-q^{-1} \right)^4 = -1024 \: s_1^4. \lab{Q_val} \ee

As is shown in \cite{Bas_H}, one can deduce useful information about the shape of the 
operators $X,W$, starting only with the commutation relations \re{Heun_XW}. Namely, 
it is possible to show that there exists a basis $e_n$ where the operator $X$ is diagonal 
while the operator $W$ is tridiagonal. On the other hand, in the {\lq}dual{\rq} basis 
$\t e_n$ (for which the operator $W$ is diagonal: $W {\t e_n} = \mu_n {\t e_n}$) the 
matrix of the operator $X$ will have, in general, a nonlocal shape with all entries being 
nonzero.

\section{Concluding remarks}
\setcounter{equation}{0}
To summarize, we have demonstrated that the {\lq}position{\rq} and {\lq}momentum{\rq} 
operators $X$ and $Y$ of the discrete Fourier transform form a special case of the Askey-Wilson 
algebra $AW(3)$. On the other hand, the creation and annihilation operators $A$ and 
$A^{\dagger}$ generate more complicated and {\lq}non-classical{\rq} algebra with cubic 
members in commutation relations. Moreover, we have shown that the position operator $X$ 
together with the operator $W=[A, A^{\dagger}]$, which commutes with the discrete
Fourier transform, constitute an algebra of Heun type. 

These results clarify what is an algebraic distinction of the discrete Fourier transform 
from the continuous one. In the classical case the commutator of creation and annihilation 
operators is equal to a constant and this leads to the well known Heisenberg-Weyl algebra,
which generates exact solutions of the quantum harmonic oscillator in terms of the Hermite
polynomials $H_n(x)$, times the Gaussian factor $exp(-\,x^2/2)$. 

Contrary to the continuous case, the creation and annihilation operators for the discrete Fourier 
transform $A^{\dagger}$ and $A$ do not form neither Lie algebra nor any of {\lq}classical{\rq} 
nonlinear algebras of Askey-Wilson type. This makes the problem of discrete Fourier harmonic 
oscillator, governed by the standard Hamiltonian $H = A^{\dagger}A + AA^{\dagger}$, hardly 
exactly solvable. This phenomenon of non-solvability of $H$ was observed earlier by a number of 
authors. Here we propose a simple algebraic explanation of this phenomenon. It remains but to 
mention in this connection that there is still a possibility to construct a version of the discrete 
Fourier harmonic oscillator but in terms of the eigenvectors of the difference operator $A^{\dagger}A$,  
upon interpreting it as a discrete analog of the number operator $N = a^{\dagger}a$ for the harmonic 
oscillator. Then it is possible to use the same procedure of constructing the eigenvectors of 
$A^{\dagger}A$ in the form of the ladder-type hierarchy, as in the harmonic oscillator case in
quantum mechanics: $a\,\psi_0(x) = 0$ and  $\psi_n(x) = 1/\sqrt{n}\, a^{\dagger}\psi_{n-1}(x), 
n= 1,2,3,...$. An important aspect to observe is that the operator $A^{\dagger}A$ commutes, 
$[A^{\dagger}A, P_d]=0$, with the discrete reflection operator $P_d$ for an arbitrary dimension
$N$. Nevertheless, thus constructed family of the eigenvectors of $A^{\dagger}A$ turn out to
be $P_d$-symmetric only for odd dimensions $N=2L+1$.In the all even cases with $N=2L$ the
$P_d$-symmetry in the space of the eigenvectors of $A^{\dagger}A$ is spontaneously broken.
This essential distinction between odd and even dimensions has been shown to be consistent with
the old formula \cite{McCPar}-\cite{DicSte} for the multiplicities of the eigenvalues, associated
with the $N$-dimensional DFT (consult \cite{MesNat}--\cite{SprProc} for a detailed discussion of 
this approach).

Note that the operators $X$ and $Y$ under discussion are similar to those considered 
earlier by Gr\"unbaum \cite{Gr_H}. These operators were exploited  in \cite{Gr_H}  in 
order to solve the  minimization problem of the uncertainty product $|\Delta X| |\Delta Y|$. 
Using a general approach for noncommuting compact operators $X,Y$ (see, e.g., \cite{Ifantis}), 
one can show that such minimization is achieved for generalized {\lq}coherent{\rq} states, 
that is, for the eigenvectors of the operator $V=X + i \gamma Y$ with an arbitrary real nonzero 
parameter $\gamma$. In the classical case, when $X$ is the coordinate operator and $Y$ the 
momentum, the solution of this problem for an arbitrary value of $\gamma$ is explicit and simple; 
it actually corresponds to the so-called squeezed states \cite{Nieto}. However, in our case the 
operator $V$ belongs to the class of  Heun operators defined by \re{W}. This leads to the conclusion 
that an explicit expression for such eigenvectors is hardly possible.  

Finally, it may be interesting to note that some general properties of representations of the 
Askey-Wilson algebra for roots of unity were considered in \cite{Huang} and \cite{Huang2}. 
Also, after completing this work, V.Spiridonov has drawn our attention to the paper \cite{DS}, 
where the intertwining relations for generators of Sklyanin algebra were established with 
respect to the so-called elliptic Fourier transform. It would be interesting to interrelate the 
results of the present work with those obtained in \cite{DS}.        

\keywords{}
\bigskip
{\Large\bf Acknowledgments}

We are grateful to E.Koelink, V.Spiridonov, J-F. Van Diejen, A.Veselov and L.Vinet for 
illuminating discussions.  MA thanks the support of Proyecto CONACYT No. A1-S-8793 
{\lq}Problemas de frontera para ecuaciones integrales y diferenciales parciales de 
an\'alysis y f\'isica matem\'atica{\rq} (Universidad Aut\'onoma del Estado de Morelos), 
NA thanks Project No.IG-100119 awarded by the Direcci\'on General de Asuntos del 
Personal Acad\'emico, Universidad Nacional Aut\'onoma de M\'exico. The work of AZ 
is supported by the National Science Foundation of China (Grant No.11771015). 
AZ also gratefully acknowledges the hospitality of the CRM over an extended period 
and the award of a Simons CRM professorship.

\bigskip

\bb{99}

\bi{McCPar} J.H.McClellan and T.W.Parks, {\it Eigenvalue and eigenvector decomposition 
of the discrete Fourier transform}, IEEE Trans. Audio Electroac., {\bf AU-20}, 66--74, 1972.

\bi{AusTol} L.Auslander and R.Tolimieri, {\it Is computing with the finite Fourier transform 
pure or applied mathematics?}  Bull. Amer. Math. Soc., {\bf 1}, 847--897, 1979.

\bi{DicSte} B.W.Dickinson  and K.Steiglitz, {\it Eigenvectors and functions of the discrete 
Fourier transform}  IEEE Trans. Acoust. Speech, {\bf 30}, 25--31, 1982.

\bi{Mehta} M.L.Mehta, {\it Eigenvalues and eigenvectors of the finite Fourier transform}, 
J. Math. Phys., {\bf 28}, 781--785, 1987.

\bi{Matv} V.B.Matveev, {\it Intertwining relations between the Fourier transform and
discrete  Fourier transform, the related functional identities and beyond}, Inverse Prob., 
{\bf 17}, 633--657, 2001.

\bi{Ata} N.M.Atakishiyev, {\it On $q$-extensions of Mehta's eigenvectors of the finite
Fourier transform}, Int. J. Mod. Phys. A, {\bf 21}, 4993--5006, 2006.

\bi{Sylv} J.J.Sylvester, {\it Thoughts on inverse orthogonal matrices, simultaneous sign
successions, and tessellated pavements in two or more colours, with applications to Newton's
rule, ornamental tile-work, and the theory of numbers}, Philos. Mag., {\bf 34}, 461--475, 1867.

\bi{MesNat} M.K.Atakishiyeva and N.M.Atakishiyev, {\it On the raising and lowering difference 
operators for eigenvectors of the finite Fourier transform}, J. Phys: Conf. Ser., {\bf 597}, 
012012, 2015.

\bi{4Open} M.K.Atakishiyeva, N.M.Atakishiyev and J.Loreto-Hern{\'a}ndez, {\it More 
on algebraic properties of the discrete Fourier transform raising and lowering operators}, 
4 Open, {\bf 2}, 1--11, 2019.

\bi{SprProc} M.K.Atakishiyeva, N.M.Atakishiyev and J.Loreto-Hern{\'a}ndez, {\it On the 
discrete Fourier transform eigenvectors and spontaneous symmetry breaking}, Springer 
Proceedings in Mathematics\,\&\,Statistics, {\bf 333}, 549--569, 2020.

\bi{Zhe_AW} A.S.Zhedanov, {\it {\lq}{\lq}Hidden symmetry{\rq}{\rq} of Askey-Wilson 
polynomials}, Theoretical and Mathematical Physics {\bf 89}, 1146--1157, 1991.

\bi{Ter_AW} P.Terwilliger, {\it The Universal Askey-Wilson Algebra}, SIGMA {\bf 7}, 
069, 2011, arXiv:1104.2813.

\bi{ZS_roots} V.Spiridonov and A.Zhedanov, {\it q-Ultraspherical polynomials for 
q a root of unity}, Lett.Math.Phys. {\bf 37}, 173--180, 1996.

\bi{Davis} P.J.Davis, Circulant Matrices, Wiley, New York, 1970.

\bi{GVZ_H} F. A. Gr\"unbaum, L. Vinet, and A. Zhedanov, {\it Algebraic Heun Operator and 
Band-Time Limiting}, Communications in Mathematical Physics {\bf 364}, 1041--1068, 2018, 
arXiv: 1711.07862.

\bi{Bas_H} P. Baseilhac, S. Tsujimoto, L. Vinet, and A. Zhedanov, {\it The Heun-Askey-Wilson 
Algebra and the Heun Operator of Askey-Wilson Type}, Annales Henri Poincar\'e {\bf 20} , 
3091--3112, 2019, arXiv: 1811.11407.

\bi{Gr_H} F.A. Gr\"unbaum, {\it The Heisenberg inequality for the discrete Fourier transform}, 
Appl. Comput. Harmon. Anal. {\bf 15}, 163--167, 2003.

\bi{Ifantis} E.K.Ifantis, {\it Minimal Uncertainty States for Bounded Observables}, 
J.Math.Phys. {\bf 12}, 2512--2516, 1971.

\bi{Nieto} M.M.Nieto and D.R.Truax, {\it Squeezed states for general systems}, 
Phys. Rev. Lett. {\bf 71}, 28--43, 1993.

\bi{Huang} H.Huang, {\it Finite-dimensional irreducible modules of the universal 
Askey-Wilson algebra at roots of unity}, Journal of Algebra {\bf 569}, 12--29, 2021.

\bi{Huang2} H.Huang, {\it Center of the universal Askey-Wilson algebra at roots of unity}, 
Nucl. Phys. B {\bf 909}, 260--296, 2016.

\bi{DS} S.E.Derkachov and V.P.Spiridonov, {\it Yang-Baxter equation, parameter 
permutations, and the elliptic beta integral},  Russ. Math. Surv. {\bf 68}, 10--27,
2013. arXiv: 1205.3520v2.

\end{thebibliography}

\end{document}